\let\csname equation*\endcsname\relax
\let\csname endequation*\endcsname\relax
\newcommand{\reffig}[1]{figure \ref{#1}}
\newcommand{\reffact}[1]{fact \ref{#1}}
\newcommand{\bra}[1]{\langle #1|}
\newcommand{\ket}[1]{|#1\rangle}
\newcommand{\trace}{\operatorname{Tr}}
\newtheorem{theorem}{Theorem}[section]
\newtheorem{fact}{Fact}
\newtheorem{corollary}{Corollary}[section]
\newtheorem{proposition}{Proposition}[section]
\newtheorem{definition}{Definition}
\begin{document}

\title{Joint separable numerical range and bipartite entanglement witness}

\author{Pan Wu$^1$ and Runhua Tang$^2$}%\thanks{616784940@qq.com}

\address{$^1$School of Data and Computer Science, Sun Yat-sen University, Guangzhou 510000, China}
\address{$^2$Network and Information Center, Guangdong Food and Drug Vocational College,Guangzhou 510520, China}
\eads{\mailto{wupanchina@163.com}, \mailto{magic8515@163.com}}

\begin{abstract}
The entanglement witness is an important tool to detect entanglement. In 2017 an idea considering a pair of Hermitian operators of product form was published, which is called ultrafine entanglement witnessing. In 2018 some rigorous results were given. Here we improve their work. First we point this idea can be directly derived from an earlier concept named joint separable numerical range and explain how it works as a series of witnesses. Second by a simple method we present a sufficient condition for an effective pair. Finally we prove this condition is necessary for optimization. [M.Gachechiladze \emph{et al. {\rm 2018} J. Phys. {\rm A}: Math.Theor.} {\bf 51} 36.]
\bigskip

\noindent{Keywords: entanglement witness, joint numerical range, ultrafine}
\end{abstract}

%\maketitle
%%%%%%%%%%%%%%%%%%%%%%%%%%%%%%%%%%%%%%%%%%%%%
\section{Introduction}
%%%%%%%%%%%%%%%%%%%%%%%%%%%%%%%%%%%%%%%%%%%%%
The entanglement witness(in short, witness) is practical to judge whether a quantum state is entangled, which was proved NP-hard\cite{nphard}. A witness is defined as an Hermitian operator $W$ such that (\romannumeral1) $\trace(W\sigma) \geq 0$ for every  $\sigma\in S_{sep}$ and (\romannumeral2) $\trace(W\rho) < 0$ for at least one $\rho$, where $S_{sep}$ denotes the set of separable states. If $W$ satisfies (\romannumeral1) it is called \emph{block-positive}\cite{superlong}; (\romannumeral2) means $W$ is not positive semi-definite. Thus a negative expectation value of $W$ measuring $\rho$ establishes $\rho$ is entangled.

A witness can be constructed as a block-positive operator:
\begin{equation}
W_{min}(H)\overset{\text{def}}{=}H-\lambda^{\otimes}_{min}\mathbbm{1}\nonumber
\end{equation}
where $\lambda^{\otimes}_{min}$ denotes the minimum expectation value of Hermitian operator $H$ within $S_{sep}$. Similarly we can define $W_{max}(H)$. When the minimum eigenvalue of $H$ is less than $\lambda^{\otimes}_{min}$, $W_{min}(H)$ is really a witness. This kind of witness is called weakly optimal\cite{superlong}, which is no longer block-positive when subtracted by a positive operator. Furthermore if the set of entangled states detected by $W$ is not included by that of any other witness, we say $W$ is optimal. Authors of \cite{2000} proved that witness $W$ is optimal if and only if it is no longer block-positive when subtracted by a positive semi-definite operator.

Recently\cite{ultrafine} published an idea considering a pair of Hermitian operators $(H_1,H_2)$ to judge entanglement. To make it convenient for local measurement, they focus on the scenario when $H_i=A_{i}\otimes B_{i}$ where $A_i$ and $B_i$ are Hermitian for $i=1,2$. That is what we call ``product form" in Abstract. What we extract from\cite{ultrafine} is that we establish $\rho$ is entangled if the expectation value pair of it can not be that of any separable state. The set of pairs of expectation values of $(H_1,H_2)$ within $S_{sep}$ is the joint separable numerical range of $(H_1,H_2)$, which is derived from the concept joint numerical range.

The joint numerical range of $(H_1,H_2)$ is equivalent to the classic concept numerical range. Derived concepts of it are widely used in quantum theory(see\cite{chenjianxin} and Section 1.B of\cite{versatile}).  When the operators are of product form, the joint separable numerical range is the convex hull of a kind of product of two joint numerical ranges. When we apply this concept to judge entanglement, the essence is that although $S_{sep}$ is very hard to characterize, one can try to characterize its image in a low-dimensional space, i.e. the joint separable numerical range. For one Hermitian operator $H$, this image is $[\lambda^{\otimes}_{min},\lambda^{\otimes}_{max}]$; for a pair $(H_1,H_2)$, it is a more complex set in $\mathbb{R}^{2}$.

Let us call the above notion macro-view. Then the micro-view is that $\rho$ can be established to be entangled by $(H_1,H_2)$ if and only if it can be witnessed by at least one $W_{min}(k_1H_1+k_2H_2)$, which is provided by\cite{German}. This is similar to the fact that every entangled state can be detected by a witness. Hence we reckon generally a pair is finer than a single operator since it generally represents a series of weakly optimal witnesses. Thus a basic problem arises that \emph{when at least one $W_{min}(k_1H_1+k_2H_2)$ is really a witness}, i.e., $(H_1,H_2)$ can detect at least one entangled state (in short, $(H_1,H_2)$ is \emph{effective}). Authors of \cite{German} found a canonical necessary condition: $A_1A_2\neq A_2A_1$ and $B_1B_2\neq B_2B_1$. Moreover, this is sufficient in $\mathbb{C}^{2}\otimes \mathbb{C}^{2}$ system. They also discussed the $\mathbb{C}^{2}\otimes \mathbb{C}^{3}$ scenario.

However, the method in \cite{German} based on the Perturbation Theory is complicated and its result needs extension to higher dimensional spaces since PPT criteria\cite{PPT}. By a simple method we derive a sufficient condition for effective $(H_1,H_2)$ \emph{independent of dimension}. This method is based on the trivial fact that the sum of two product vectors is generally entangled. Next by orthogonally dividing the whole space into invariant subspaces, we provide a more powerful conclusion: this sufficient condition is \emph{necessary for optimal} $(H_{1},H_{2})$. This method also gives instruction in Section 4.

This paper is organized as follows: In Section 2 we supplement Section 1 with the rigorous definition and corresponding explanation. Section 3 is the solution to the basic problem mentioned in the previous paragraph. Section 4 is a short conclusion and the plan for further research. Some explanations for the end of Section 3 and the beginning of Section 4 are in Appendix A and Appendix B respectively.

\section{Definitions and more explanation}
To avoid overwhelming readers with many definitions in this section, we stress that only some of them are necessary for understanding Section 3, especially the labeled equations. The others are assistant. For example, some will be discussed in Section 4.

Let $d$ be the dimension of a system. Let $d_A$ and $d_B$ be the dimensions of subsystems respectively. Let $M$ act on $\mathbb{C}^{d}$ or $\mathbb{C}^{d_A}\otimes \mathbb{C}^{d_B}$. From now on we default that $\alpha\in\mathbb{C}^{d_A}$, $\beta\in\mathbb{C}^{d_B}$ and so do the normalized vectors $\ket{\alpha}$,$\ket{\beta}$. Define
\begin{align}
&\Lambda(M)\overset{\text{def}}{=}\{\bra{\phi}M\ket{\phi}|\ket{\phi}\in\mathbb{C}^{d}\},\nonumber\\
&\Lambda^{\otimes}(M)\overset{\text{def}}{=}\{\bra{\alpha\beta}M\ket{\alpha\beta}\},\nonumber\\
&\Lambda^{sep}(M)\overset{\text{def}}{=}\{\trace(M\rho)|\rho \in S_{sep}\},\nonumber
\end{align}
where $\Lambda(M)$ denotes the numerical range of $M$. $\Lambda(M)$ is convex and when $d=2$ it is generally a ellipse disc\cite{100nian}.

Let $H_1=(M^{*}+M)/2$ and $H_2=i(M^{*}-M)/2$. It is simple to verify that both $H_{1}$ and $H_{2}$ are Hermitian and  $M=H_{1}+iH_{2}$. Thus represented on $\mathbb{R}^{2}$, $\Lambda(M)$ is equivalent to $\{(\bra{\phi}H_{1}\ket{\phi},\bra{\phi}H_{2}\ket{\phi})|\ket{\phi}\in\mathbb{C}^{d}\}$, which is named joint numerical range of $(H_{1},H_{2})$ and denoted by $\Lambda(H_{1},H_{2})$. Similarly when $H_{1}$ and $H_{2}$ act on $\mathbb{C}^{d_{A}}\otimes\mathbb{C}^{d_{B}}$,
\begin{align}
&\Lambda^{\otimes}(H_{1},H_{2})\overset{\text{def}}{=}\{(\bra{\alpha\beta}H_{1}\ket{\alpha\beta},\bra{\alpha\beta}H_{2}\ket{\alpha\beta}) \}, \nonumber\\
&\Lambda^{sep}(H_{1},H_{2})\overset{\text{def}}{=}\{(\trace(H_{1}\sigma),\trace(H_{2}\sigma))|\sigma\in S_{sep}\} \nonumber
\end{align}
where $\Lambda^{sep}(H_{1},H_{2})$ denotes the joint separable numerical range of $(H_{1},H_{2})$. Since the definition of $S_{sep}$, clearly $\Lambda^{sep}(H_{1},H_{2})$ is the convex hull of $\Lambda^{\otimes}(H_{1},H_{2})$. $\Lambda^\otimes(H_1,H_2)$ may not be convex: consider $H_{1}=\ket{00}\bra{00}$ and $H_{2}=\ket{11}\bra{11}$\cite{withtensor}. This fact corresponds to the correction provided by\cite{German} against theorem 1 of\cite{ultrafine}. The above definitions can be seen in\cite{versatile} or\cite{chenjianxin}.

We say $\lambda^{\otimes}_{min}(\lambda^{\otimes}_{max})$ corresponds to Hermitian operator $H$ if it is the minimum(maximum) value of $\{\bra{\alpha\beta}H\ket{\alpha\beta}\}$. Since $\{\ket{\alpha\beta}\}$ is a close set we can obtain $\lambda^{\otimes}_{min}$ and $\lambda^{\otimes}_{max}$. We say $H$ is block-positive if $\lambda^{\otimes}_{min}\geq0$ where $\lambda^{\otimes}_{min}$ corresponds to $H$. Then as the case with one parameter, $\Lambda^{sep}(H)=[\lambda^{\otimes}_{min},\lambda^{\otimes}_{max}]$. These have been stated in Section 1.

According to the 3rd paragraph of Section 1 $\rho$ is established to be entangled by $(H_{1},H_{2})$ iff
\begin{equation}\label{zhunze}
(\trace(H_1\rho),\trace(H_2\rho))\notin \Lambda^{sep}(H_{1},H_{2}).
\end{equation}
Since $\Lambda(H_{1},H_{2})$ is convex then $(H_{1},H_{2})$ is \emph{not} effective to detect entanglement iff
\begin{equation}\label{effective}
\Lambda^{sep}(H_{1},H_{2})=\Lambda(H_{1},H_{2}).
\end{equation}
Generally \eqref{effective} does not hold. One example where \eqref{effective} holds is Observation 2 of\cite{chenjianxin}. Another example is when $H_i=A_{i}\otimes B_{i}$ where $A_i$ and $B_i$ are Hermitian for $i=1,2$, if $A_1A_2=A_2A_1$ or $B_1B_2=B_2B_1$ then \eqref{effective} holds, which is cited in Section 1. For complex matrices there is an earlier and similar property cited in\cite{withtensor} that if $M_A$ is normal then $\Lambda(M_A\otimes M_B)=\Lambda^{sep}(M_A\otimes M_B)$.

We say $H_1\geq H_2$ if $H_1-H_2$ is positive semi-definite. From now on we \emph{default $W$ to be a block-positive operator}. Define the detection range of it like\cite{2000}:
\begin{equation}
D(W)\overset{\text{def}}{=}\{\rho|\trace(W\rho)<0,\rho\geq0, \trace(\rho)=1\}.
\end{equation}

Then we specifically explain why \eqref{zhunze} is equivalent to the statement that at least one $W_{min}(k_1H_1+k_2H_2)$ can witness $\rho$, which is very briefly explained by\cite{German} behind its (2). From Separating Hyperplane Theorem, \eqref{zhunze} holds iff there exist $k_{1},k_{2}\in \mathbb{R}$ such that
\begin{equation}
k_{1}\trace(H_{1}\rho)+k_{2}\trace(H_{2}\rho)<k_{1}\trace(H_{1}\sigma)+k_{2}\trace(H_{2}\sigma)\nonumber
\end{equation}
holds for every $\sigma\in S_{sep}$, which means for any point out of $\Lambda^{sep}(H_{1},H_{2})$ there exists a line separating it. Hence \eqref{zhunze} holds iff there exist $k_{1},k_{2}\in \mathbb{R}$ such that
\begin{equation}
k_{1}\trace(H_{1}\rho)+k_{2}\trace(H_{2}\rho)<\lambda^{\otimes}_{min} \nonumber
\end{equation}
where $\lambda^{\otimes}_{min}$ corresponds to $k_{1}H_{1}+k_{2}H_{2}$. That means at least one $W_{min}(k_{1}H_{1}+k_{2}H_{2})$ can witness $\rho$. Thus the set of entangled states detected by $(H_{1},H_{2})$ can be denoted as follows:
\begin{equation}\label{union}
\widetilde{D}(H_1,H_2)\overset{\text{def}}{=}\bigcup\limits_{k_{1},k_{2}\in \mathbb{R}}D(W_{min}(k_{1}H_{1}+k_{2}H_{2})).
\end{equation}

An example is when $H_{1}=X\otimes X$ and $H_{2}=Z\otimes Z$($X,Z$ are Pauli operators acting on $\mathbb{C}^2$), $\Lambda(X,Z)$ is the circular disc $x^2+y^2\leq1$(see Example 1 of \cite{arx}); $\Lambda(H_{1},H_{2})$ is the square $|x|\leq1$ and $|y|\leq1$; $\Lambda^{\otimes}(H_{1},H_{2})$=$\Lambda^{sep}(H_{1},H_{2})$ is the square $|x|+|y|\leq 1$, which is depicted in figure 1 of\cite{German}. In \reffig{fig1} of our paper we illustrate that generally a tangent line to $\Lambda^{sep}(H_{1},H_{2})$  represents a witness. By the way, since the vertex (1,1) of $\Lambda(H_{1},H_{2})$ is obtained on $\ket{\phi}=(\ket{00}+\ket{11})/\sqrt{2}$ then $\bra{\phi}(X\otimes X+Z\otimes Z)\ket{\phi}=2$, which is equivalent to the fact that $\ket{\phi}$ provides the maximal violation to CHSH inequality in 2.6 of\cite{book}.
\begin{figure}[h]
    \centering
    \includegraphics[width=0.6\columnwidth]{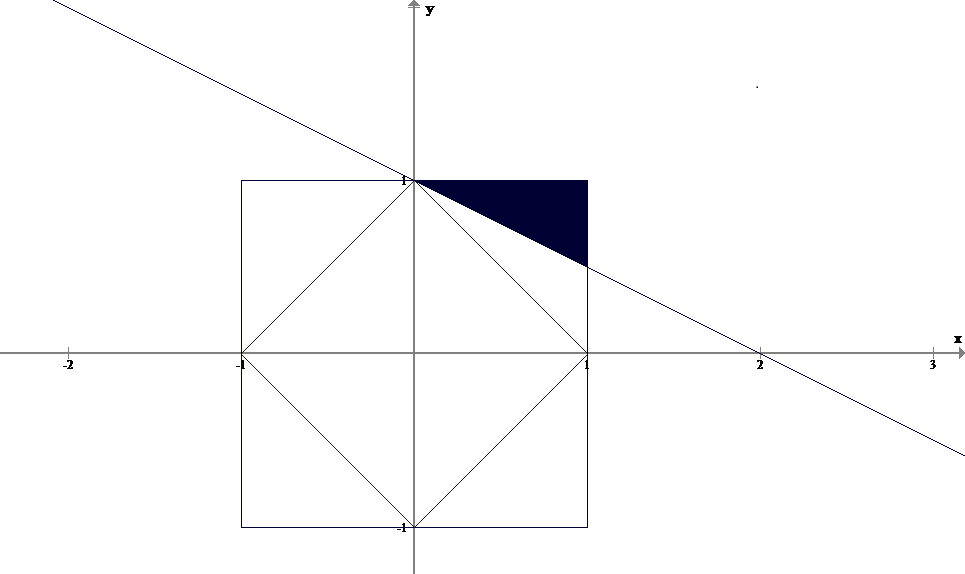}
    \caption{When $H_{1}=X\otimes X$ and $H_{2}=Z\otimes Z$, $\Lambda(H_{1},H_{2})$ and $\Lambda^{sep}(H_{1},H_{2})$ is the outer and inner square respectively. A block-positive operator $W_{min}(k_{1}H_{1}+k_{2}H_{2})$ is generally a witness represented by the line $k_{1}x+k_{2}y=\lambda^{\otimes}_{min}$ tangent to $\Lambda^{sep}(H_{1},H_{2})$(the inner square). If $k_{1}=-1$ and $k_{2}=-2$ this tangent line is $x+2y=2$ and the entangled states detected by corresponding witness is represented by coloured part. }
    \label{fig1}
\end{figure}
%%%%%%%%%%%%%%%
\section{When is a product pair effective}
%%%%%%%%%%%%%%%
As is explained before, a pair of Hermitian operators $(H_{1},H_{2})$ is effective iff there exist $k_1,k_2\in\mathbb{R}$ such that $W_{min}(k_1H_1+k_2H_2)$ is a witness. \emph{From now on we default that $H_i=A_i\otimes B_i$ where $A_i$ and $B_i$ are Hermitian for $i=1,2$}. Then we derive that if the common eigenvectors of $A_1$ and $A_2$ only correspond to eigenvalue 0 and that also holds for $B_1$ and $B_2$, then $(H_{1},H_{2})$ can detect entanglement, which was called the sufficient condition before. Moreover, this scenario includes all the optimal $(H_1,H_2)$. Hence we claim this sufficient condition is almost necessary.
\begin{fact}
\label{entangle}
If $\alpha_1\otimes \beta_1+\alpha_2\otimes \beta_2$ is a product vector, then $\{\alpha_1$, $\alpha_2\}$ is linear dependent or $\{\beta_1$, $\beta_2\}$ is linear dependent.
\end{fact}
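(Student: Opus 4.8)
The plan is to establish the statement in its contrapositive, disjunctive form: assuming that $\{\beta_1,\beta_2\}$ is linearly \emph{independent}, I will show that $\{\alpha_1,\alpha_2\}$ must then be linearly dependent; since the two tensor factors play symmetric roles, this yields exactly the asserted dichotomy. So I would begin by writing the hypothesis as $\alpha_1\otimes\beta_1+\alpha_2\otimes\beta_2=\mu\otimes\nu$ for some $\mu\in\mathbb{C}^{d_A}$ and $\nu\in\mathbb{C}^{d_B}$, and immediately set aside the trivial case in which $\{\beta_1,\beta_2\}$ is already linearly dependent (then there is nothing to prove).

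The key step is a partial contraction. Because $\beta_1,\beta_2$ are linearly independent, their span is two-dimensional, so there exist bras $\bra{f_1},\bra{f_2}$ on $\mathbb{C}^{d_B}$ forming the biorthogonal system $\braket{f_i}{\beta_j}=\delta_{ij}$ (extend $\{\beta_1,\beta_2\}$ to a basis and dualize — no orthonormality is needed). Applying the linear map $\mathbbm{1}\otimes\bra{f_i}$, which sends $\xi\otimes\eta\mapsto\braket{f_i}{\eta}\,\xi$, to both sides of the displayed identity gives $\alpha_i=\braket{f_i}{\nu}\,\mu$ for $i=1,2$. Hence $\alpha_1$ and $\alpha_2$ both lie on the line $\mathbb{C}\mu$, so $\{\alpha_1,\alpha_2\}$ is linearly dependent, as required.

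I expect no genuine obstacle here; the only point to watch is the degenerate configurations — some $\alpha_i$ or $\beta_i$ vanishing, or the product vector $\mu\otimes\nu$ being $0$ (that is, $\mu=0$ or $\nu=0$). The contraction argument absorbs all of these uniformly: if $\mu=0$ or $\nu=0$ the scalars $\braket{f_i}{\nu}$ simply force $\alpha_1=\alpha_2=0$, a (trivially) dependent set. For readers who prefer a matrix picture I would also note the equivalent route: identify $\mathbb{C}^{d_A}\otimes\mathbb{C}^{d_B}$ with $d_A\times d_B$ matrices via $\alpha\otimes\beta\mapsto\alpha\beta^{\top}$, under which a vector is a product vector precisely when its matrix has rank at most $1$; then $\alpha_1\beta_1^{\top}+\alpha_2\beta_2^{\top}$ visibly has rank $2$ whenever both $\{\alpha_1,\alpha_2\}$ and $\{\beta_1,\beta_2\}$ are independent, contradicting the hypothesis. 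This matrix viewpoint is also the one that makes transparent the heuristic used later in Section 3, namely that a sum of two product vectors is generically entangled.
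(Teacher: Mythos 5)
Your proof is correct. Note that the paper itself offers no proof of \reffact{entangle}: it is stated as a bare fact, justified only by the remark in Section 1 that the argument rests on ``the trivial fact that the sum of two product vectors is generally entangled,'' so there is no in-paper argument to compare against. Your dual-basis contraction is the standard and complete justification: extending the independent pair $\{\beta_1,\beta_2\}$ to a basis, dualizing to get $\braket{f_i}{\beta_j}=\delta_{ij}$, and applying $\mathbbm{1}\otimes\bra{f_i}$ to $\alpha_1\otimes\beta_1+\alpha_2\otimes\beta_2=\mu\otimes\nu$ forces $\alpha_i=\braket{f_i}{\nu}\,\mu$, so both $\alpha_i$ lie on the line $\mathbb{C}\mu$; the degenerate case $\mu\otimes\nu=0$ is absorbed since it yields $\alpha_1=\alpha_2=0$. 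Your alternative matrix formulation (a vector of $\mathbb{C}^{d_A}\otimes\mathbb{C}^{d_B}$ is a product vector iff its associated $d_A\times d_B$ matrix has rank at most one, while $\alpha_1\beta_1^{\top}+\alpha_2\beta_2^{\top}$ has rank two when both pairs are independent) is the Schmidt-rank picture and is equally valid; it has the added virtue of making \reffact{trivial} and the extension used in the proof of \cref{kernel} (linear independence of $\{A_1\otimes B_1, A_2\otimes B_2\}$) transparent by the same rank count. Either route would serve as the missing proof in the paper.
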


\begin{fact}
\label{trivial}
If $\alpha_1\otimes \beta_1\neq0$ and $\alpha_2\otimes \beta_2\neq0$ meanwhile $\{\alpha_1\otimes \beta_1$, $\alpha_2\otimes \beta_2\}$ is linear dependent, then $\{\alpha_1$,$\alpha_2\}$ is linear dependent; so is $\{\beta_1$,$\beta_2\}$.
\end{fact}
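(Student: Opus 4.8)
The plan is to recognize that a product vector $\alpha\otimes\beta$ is nothing but a rank-one matrix, and to read the statement off from the uniqueness of the rank-one factorization. Identifying $\mathbb{C}^{d_A}\otimes\mathbb{C}^{d_B}$ with the space of $d_A\times d_B$ matrices via $\alpha\otimes\beta\mapsto\alpha\beta^{T}$, the hypotheses say that $\alpha_1\beta_1^{T}$ and $\alpha_2\beta_2^{T}$ are two nonzero rank-one matrices with $\alpha_2\beta_2^{T}=c\,\alpha_1\beta_1^{T}$ for some scalar $c\neq0$. The column space of a nonzero rank-one matrix $\alpha\beta^{T}$ is exactly $\mathrm{span}\{\alpha\}$ and its row space is $\mathrm{span}\{\beta\}$; scaling by $c\neq0$ changes neither, so $\mathrm{span}\{\alpha_1\}=\mathrm{span}\{\alpha_2\}$ and $\mathrm{span}\{\beta_1\}=\mathrm{span}\{\beta_2\}$, which is the claim.

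Concretely, I would avoid the matrix language and argue by contracting one tensor factor. First note that $\alpha_i\otimes\beta_i\neq0$ forces $\alpha_i\neq0$ and $\beta_i\neq0$ for $i=1,2$, and that linear dependence of the two nonzero vectors means $\alpha_2\otimes\beta_2=c\,(\alpha_1\otimes\beta_1)$ with $c\neq0$. Pick an index $j$ with $\braket{j}{\beta_1}\neq0$ (possible since $\beta_1\neq0$) and apply $\mathbbm{1}\otimes\bra{j}$ to both sides to get $\braket{j}{\beta_2}\,\alpha_2=c\,\braket{j}{\beta_1}\,\alpha_1$. The coefficient on the right is nonzero, and since $\alpha_1\neq0$ the coefficient on the left must be nonzero as well; hence $\alpha_1$ and $\alpha_2$ are proportional, i.e. $\{\alpha_1,\alpha_2\}$ is linearly dependent. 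The symmetric contraction $\bra{i}\otimes\mathbbm{1}$ against an index $i$ with $\braket{i}{\alpha_1}\neq0$ gives that $\{\beta_1,\beta_2\}$ is linearly dependent.

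There is essentially no obstacle here; the only point that needs care is the bookkeeping of non-vanishing. The hypotheses $\alpha_i\otimes\beta_i\neq0$ are used precisely to guarantee that all four vectors, and hence all the scalar coefficients produced by the contractions, are nonzero, so that no division by zero or degenerate branching occurs. (If one of the products were allowed to vanish the conclusion would fail: $\alpha_1\otimes 0$ is dependent on any $\alpha_2\otimes\beta_2$ without constraining $\alpha_1,\alpha_2$ at all.) This fact, together with \reffact{entangle}, is what feeds the dichotomy argument used later in this section.
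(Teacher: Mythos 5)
Your proof is correct. Note that the paper states this as a Fact with no proof at all, treating it as elementary, so there is no argument of the authors' to compare against; your contraction argument (equivalently, the uniqueness of the row and column spaces of a nonzero rank-one matrix) is the standard verification, and you correctly pinpoint that the hypotheses $\alpha_i\otimes\beta_i\neq0$ are exactly what rule out the degenerate case where a zero factor would make the set dependent without constraining the other factors.
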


\begin{proposition}
\label{initial}
If $k_1,k_2\in \mathbb{R}$, $k_1,k_2\neq0$ and $\lambda\neq0$ is an eigenvalue of
\begin{equation}\label{combination}
k_1A_1\otimes B_1+k_2A_2\otimes B_2,
\end{equation}
then for any product eigenvector $\alpha\otimes \beta$ corresponding to $\lambda$, $\alpha$ is a common eigenvector of $A_1$ and  $A_2$ or $\beta$ is a common eigenvector of $B_1$ and  $B_2$.
\end{proposition}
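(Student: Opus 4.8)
The plan is to rewrite the eigenvalue relation $(k_1A_1\otimes B_1+k_2A_2\otimes B_2)(\alpha\otimes\beta)=\lambda\,\alpha\otimes\beta$ as the tensor identity
\[
k_1(A_1\alpha)\otimes(B_1\beta)+k_2(A_2\alpha)\otimes(B_2\beta)=\lambda\,\alpha\otimes\beta ,
\]
and to read its left-hand side as a sum of two product vectors equal to the product vector on the right. Since $\alpha\otimes\beta$ is an eigenvector it is nonzero, so $\alpha\neq0$ and $\beta\neq0$; and since $\lambda\neq0$ the right-hand side is a \emph{nonzero} product vector. The hypothesis $k_1,k_2\neq0$ will be used only to move freely between linear-dependence statements for the pairs $\{k_1A_1\alpha,k_2A_2\alpha\}$ and $\{A_1\alpha,A_2\alpha\}$.

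First I would apply \reffact{entangle} to the displayed decomposition, which yields that $\{k_1A_1\alpha,k_2A_2\alpha\}$ is linearly dependent or $\{B_1\beta,B_2\beta\}$ is linearly dependent; rescaling by the nonzero $k_i$, the first case is the same as $\{A_1\alpha,A_2\alpha\}$ being linearly dependent. The whole situation is symmetric under exchanging the two tensor factors (replace $A_i$ by $B_i$ and $\alpha$ by $\beta$), so it suffices to handle the case $\{A_1\alpha,A_2\alpha\}$ linearly dependent. These two vectors cannot both vanish, for then the left-hand side of the identity would be $0$ while $\lambda\,\alpha\otimes\beta\neq0$; hence they span a one-dimensional space, say $A_1\alpha=a\,v$ and $A_2\alpha=b\,v$ with $v\neq0$. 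Substituting back gives
\[
v\otimes\bigl(k_1 a\,B_1\beta+k_2 b\,B_2\beta\bigr)=\lambda\,\alpha\otimes\beta .
\]
Both sides are product vectors and the right-hand side is nonzero, so the two are proportional nonzero product vectors, and \reffact{trivial} forces $\{v,\alpha\}$ to be linearly dependent; since both are nonzero, $v=c\,\alpha$ for some scalar $c$. Therefore $A_1\alpha=(ac)\,\alpha$ and $A_2\alpha=(bc)\,\alpha$, i.e. $\alpha$ is a common eigenvector of $A_1$ and $A_2$; in the mirror case the same argument gives that $\beta$ is a common eigenvector of $B_1$ and $B_2$.

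Given Facts \ref{entangle} and \ref{trivial}, the argument is short, so the real work is the bookkeeping of the nonzero scalars $k_1,k_2,\lambda$ and of the degenerate possibilities — one of $A_1\alpha,A_2\alpha$ vanishing, or a sub-product on the left being zero — which should be absorbed into the cases above rather than overlooked. The step I expect to need the most care is verifying the nonvanishing facts before invoking \reffact{trivial}: that $A_1\alpha,A_2\alpha$ are not both zero, and that $v$ (equivalently the second factor $k_1 a\,B_1\beta+k_2 b\,B_2\beta$) is nonzero. Both reduce to $\lambda\neq0$, which is precisely where that hypothesis enters.
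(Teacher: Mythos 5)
Your proposal is correct and follows essentially the same route as the paper's proof: apply \reffact{entangle} to the decomposition, reduce (by symmetry) to the case where $\{A_1\alpha,A_2\alpha\}$ is linearly dependent, substitute back to obtain an identity between nonzero product vectors, and invoke \reffact{trivial} together with $\lambda\neq0$ to conclude that $\alpha$ is a common eigenvector. Your handling of the degenerate possibility that one of $A_1\alpha$, $A_2\alpha$ vanishes (via the parametrization $A_1\alpha=av$, $A_2\alpha=bv$) is slightly more careful than the paper's ``without loss of generality let $A_2\alpha=kA_1\alpha$,'' but it is the same argument.
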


\begin{proof}
Initially we have
\begin{equation}
k_1A_1\alpha\otimes B_1\beta+k_2A_2\alpha\otimes B_2\beta=\lambda\alpha\otimes \beta.\nonumber
\end{equation}
From \reffact{entangle} and without loss of generosity, we suppose $\{k_1A_1\alpha$,$k_2A_2\alpha\}$ is linear dependent. Since $k_1\neq0$ and $k_2\neq0$ then $\{A_1\alpha$,$A_2\alpha\}$ is linear dependent. Without loss of generosity let $A_2\alpha$ =$kA_1\alpha$, then
\begin{equation}
A_1\alpha\otimes (k_1B_1\beta+ kk_2B_2\beta)=\lambda\alpha\otimes \beta.\nonumber
\end{equation}
Since $\lambda\neq0$ and \reffact{trivial}, we establish $\{\alpha$, $A_1\alpha\}$ is linear dependent, which implies there exists $\eta$ such that $A_1\alpha=\eta\alpha$. Then $A_2\alpha=\eta k\alpha$. Thus $\alpha$ is a common eigenvector of $A_1$ and $A_2$.
\end{proof}

\begin{theorem}
\label{kernel}
If $A_1$, $A_2$ do not have any common eigenvector and neither do $B_1$, $B_2$, then for any $k_1,k_2\in\mathbb{R}$ such that $k_1k_2\neq0$, the eigenvectors corresponding to the minimum eigenvalue of \eqref{combination} are entangled; otherwise the eigenvectors corresponding to the maximum eigenvalue of \eqref{combination} are entangled.
\end{theorem}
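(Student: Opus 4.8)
The strategy is to derive the theorem from Proposition~\ref{initial} almost immediately, after clearing away two small degeneracies. Throughout write $T\overset{\text{def}}{=}k_1A_1\otimes B_1+k_2A_2\otimes B_2$; since $k_1,k_2\in\mathbb{R}$ and the $A_i,B_i$ are Hermitian, $T$ is Hermitian, hence has real eigenvalues and an orthonormal eigenbasis, so its minimum and maximum eigenvalues are well defined. First I would observe that none of $A_1,A_2,B_1,B_2$ can be the zero operator: if, say, $A_1=0$, then every eigenvector of the Hermitian operator $A_2$ would be a common eigenvector of $A_1$ and $A_2$, contradicting the hypothesis. In particular $A_1\otimes B_1\neq0$ and $A_2\otimes B_2\neq0$.

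Next I would show $T\neq0$. If $T=0$ then $k_1A_1\otimes B_1=-k_2A_2\otimes B_2$ with $k_1k_2\neq0$, so the two nonzero simple tensors $A_1\otimes B_1$ and $A_2\otimes B_2$ are linearly dependent; applying \reffact{trivial} to their vectorisations forces $\{A_1,A_2\}$ to be linearly dependent, say $A_1=cA_2$, and then $A_1$ and $A_2$ share every eigenvector — again contradicting the hypothesis. So $T\neq0$, and since a Hermitian operator vanishes exactly when all its eigenvalues do, at least one of the minimum and maximum eigenvalues of $T$ is nonzero.

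The core step is then just the contrapositive of Proposition~\ref{initial}. Suppose first that the minimum eigenvalue $\lambda$ of $T$ is nonzero, and let $\ket{\phi}$ be any eigenvector for $\lambda$. Were $\ket{\phi}$ a product vector $\ket{\alpha\beta}$, then Proposition~\ref{initial} (legitimately invoked, since $k_1k_2\neq0$ and $\lambda\neq0$) would make $\alpha$ a common eigenvector of $A_1,A_2$ or $\beta$ a common eigenvector of $B_1,B_2$, either of which is impossible; hence every eigenvector for the minimum eigenvalue is entangled. If instead the minimum eigenvalue is $0$, then by the previous paragraph the maximum eigenvalue $\lambda'$ is nonzero, and the identical argument with $\lambda'$ in place of $\lambda$ shows every eigenvector for the maximum eigenvalue is entangled. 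This is exactly the claimed dichotomy.

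I expect no serious obstacle: once Proposition~\ref{initial} is in hand the content is a two-line contradiction, and the only delicate point is that the extreme eigenvalue one wishes to use might equal $0$, where Proposition~\ref{initial} says nothing — which is precisely why the statement is phrased as ``minimum, or else maximum'' and why one first rules out $T=0$ via \reffact{trivial}. For context, note the payoff for the rest of Section~3: in the first case the minimum of $\bra{\alpha\beta}T\ket{\alpha\beta}$ over product vectors is strictly larger than the minimum eigenvalue of $T$ (a minimising product vector would be a product eigenvector for that eigenvalue, which we have just excluded), so $W_{min}(k_1H_1+k_2H_2)$ is not positive semi-definite and hence a genuine witness; in the second case the symmetric statement for $W_{max}(k_1H_1+k_2H_2)$ does the same. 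Either way $(H_1,H_2)$ is effective for every $k_1,k_2$ with $k_1k_2\neq0$.
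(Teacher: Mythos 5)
Your proposal is correct and follows essentially the same route as the paper: establish that $\{A_1\otimes B_1,A_2\otimes B_2\}$ is linearly independent (via \reffact{trivial} extended to operators) so that the combination is nonzero and hence has a nonzero extreme eigenvalue, then apply \cref{initial} in contrapositive form to exclude product eigenvectors for that eigenvalue. Your added details (ruling out zero operators, and the closing remark on why this yields a witness) are sound elaborations of what the paper leaves implicit.
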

\noindent\textbf{Remark. }In physics context, the ground states and the most excited states include the eigenvectors corresponding to the minimum and maximum eigenvalue respectively.
\begin{proof}
From the condition part, it is clear that the operator set $\{A_1$,$A_2\}$ is linear independent. So is $\{B_1$,$B_2\}$. Extending from \reffact{trivial}, $\{A_1\otimes B_1,A_2\otimes B_2\}$ is linear independent. Then for any $k_1,k_2\neq0$, \eqref{combination}$\neq0$, which implies not both the minimum and maximum eigenvalue of \eqref{combination} are zero. Without loss of generosity suppose the minimum eigenvalue of \eqref{combination} is not zero. Then from the condition part and \cref{initial}, any product vector can not be an eigenvector of \eqref{combination} corresponding to the minimum eigenvalue.
\end{proof}

From \cref{kernel} if the condition in \cref{kernel} holds, then for any $k_1,k_2\neq0$, $W_{min}(k_1H_1+k_2H_2)$ or $W_{max}(k_1H_1+k_2H_2)$ is a witness. As is mentioned in Section 1, proposition 1 of\cite{German} proved that when $d_A=d_B=2$, if $A_1A_2\neq A_2A_1$ and $B_1B_2\neq B_2B_1$ then $(H_1,H_2)$ is effective. Since when $d_A=d_B=2$, $A_1A_2\neq A_2A_1\Leftrightarrow$$A_1$ and $A_2$ can not be orthogonally simultaneously diagonalized $\Leftrightarrow A_1$ and $A_2$ do not have any common eigenvector, \cref{kernel} of this paper implies it. Moreover, since $W_{min}(H)=W_{max}(-H)$ then \cref{kernel} means at least ``half" of $W_{min}(k_1H_1+k_2H_2)$ are witnesses, while\cite{German} only guarantees a little part since the idea of limit.

In retrospect to the proof of \cref{initial}, it is clear that the common eigenvector $\alpha$ should not only correspond to eigenvalue 0 since $\lambda\neq0$. Since the proof of \cref{kernel} still generally applies, we can make \cref{kernel} milder as follows:
\begin{corollary}\label{only0common}
Let $H_1\neq0$ or $H_2\neq0$. If the common eigenvectors of $A_1$ and $A_2$ only correspond to eigenvalue 0 and that also holds for $B_1$ and $B_2$, then $W_{min}(k_1H_1+k_2H_2)$ or $W_{max}(k_1H_1+k_2H_2)$ is a witness for any $k_1,k_2\in\mathbb{R}$ such that $k_1k_2\neq0$.
\end{corollary}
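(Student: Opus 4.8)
The plan is to reuse the proof of \cref{kernel} almost word for word, inserting the extra input forced by the milder hypothesis at exactly the two places where \cref{kernel} used ``no common eigenvector''. Fix $k_1,k_2\in\mathbb{R}$ with $k_1k_2\neq0$ and put $H\overset{\text{def}}{=}k_1H_1+k_2H_2=k_1A_1\otimes B_1+k_2A_2\otimes B_2$, which is Hermitian. Everything reduces to two claims: (i) $H\neq0$, so at least one of the smallest and the largest eigenvalue of $H$ is nonzero; and (ii) whenever $\mu$ is a nonzero eigenvalue of $H$ that is its smallest or its largest eigenvalue, the $\mu$-eigenspace of $H$ contains no product vector. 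Granting these, say the smallest eigenvalue $\mu$ of $H$ is nonzero: if $\lambda^{\otimes}_{min}(H)$ were equal to $\mu$, some normalized product vector $\ket{\alpha\beta}$ would attain $\bra{\alpha\beta}H\ket{\alpha\beta}=\mu$, and being a minimizer of the Rayleigh quotient it would lie in the $\mu$-eigenspace of $H$, contradicting (ii); hence $\lambda^{\otimes}_{min}(H)>\mu$, so $W_{min}(H)=H-\lambda^{\otimes}_{min}(H)\mathbbm{1}$ is block-positive but not positive semi-definite, i.e.\ a witness, as recalled in Section~1. The largest-eigenvalue case is symmetric and produces $W_{max}(H)$, which accounts for the ``or'' in the statement.

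For (ii) I would copy the last lines of the proof of \cref{kernel}, adding one step. Suppose $\mu\neq0$ has a product eigenvector $\alpha\otimes\beta$. Since $k_1k_2\neq0$ and $\mu\neq0$, \cref{initial} applies, so (without loss of generality) $\alpha$ is a common eigenvector of $A_1$ and $A_2$. The new step: by hypothesis every such common eigenvector is annihilated by both operators, so $A_1\alpha=A_2\alpha=0$, whence $H(\alpha\otimes\beta)=k_1(A_1\alpha)\otimes(B_1\beta)+k_2(A_2\alpha)\otimes(B_2\beta)=0$; but $H(\alpha\otimes\beta)=\mu(\alpha\otimes\beta)$ with $\alpha\otimes\beta\neq0$ and $\mu\neq0$ --- a contradiction.

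The step I expect to need the most care, although it is elementary, is (i): under the weakened hypothesis the linear independence of $\{A_1,A_2\}$ and of $\{B_1,B_2\}$ --- free in \cref{kernel} --- must be argued. First I would check that $A_1,A_2,B_1,B_2$ are all nonzero: if $A_1=0$ then every eigenvector of $A_2$ is a common eigenvector of $A_1$ and $A_2$, so the hypothesis forces $A_2=0$, and symmetrically; but $A_1=A_2=0$ (or $B_1=B_2=0$) would give $H_1=H_2=0$, excluded by ``$H_1\neq0$ or $H_2\neq0$''. Next, if $\{A_1,A_2\}$ were linearly dependent, then --- both being nonzero --- one would be a nonzero scalar multiple of the other, hence they would share all their eigenvectors; an eigenvector of $A_2$ for a nonzero eigenvalue (one exists since $A_2$ is Hermitian and nonzero) would then be a common eigenvector not annihilated by $A_2$, contradicting the hypothesis. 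Thus $\{A_1,A_2\}$ and likewise $\{B_1,B_2\}$ are linearly independent, so extending \reffact{trivial} exactly as in \cref{kernel} gives $\{A_1\otimes B_1,A_2\otimes B_2\}$ linearly independent, and therefore $H\neq0$ for $k_1k_2\neq0$. With (i) and (ii) established the corollary follows.
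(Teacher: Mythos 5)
Your proof is correct and follows essentially the same route as the paper, which simply notes that the proof of \cref{initial} forces any common eigenvector arising there to have a nonzero eigenvalue (impossible under the new hypothesis) and that the proof of \cref{kernel} then ``still generally applies''. Your step (i) --- deriving the linear independence of $\{A_1,A_2\}$ and $\{B_1,B_2\}$, hence $k_1H_1+k_2H_2\neq0$, from the weakened hypothesis together with ``$H_1\neq0$ or $H_2\neq0$'' --- is a detail the paper leaves implicit, and you have correctly identified it as the precise role of that added hypothesis.
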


An example of \cref{only0common} is $H_i=\ket{\alpha_i}\bra{\alpha_i}\otimes \ket{\alpha_i}\bra{\alpha_i}$ for $i=1,2$ where $|\langle\alpha_1|\alpha_2\rangle|^{2}$ is neither 0 nor 1. Let us explain this example in another way: Consider $\ket{\alpha_1}\bra{\alpha_1}$ and $\ket{\alpha_2}\bra{\alpha_2}$ as operators acting on $Q=$span$\{\ket{\alpha_1},\ket{\alpha_2}\}$. Since $|\langle\alpha_1|\alpha_2\rangle|^{2}$ is neither 0 nor 1, then $\ket{\alpha_1}\bra{\alpha_1}$ and $\ket{\alpha_2}\bra{\alpha_2}$ do not have any common eigenvector. From \cref{kernel} when $k_1k_2\neq0$, all the pure ground states or all the pure most excited states of \eqref{combination} are entangled vectors in $Q\otimes Q$.

Naturally an intuition comes that common eigenvectors not permitted by the condition part of \cref{only0common} are redundant. The next theorem reveals they really are. That is the main reason why we claim the sufficient condition in \cref{only0common} is almost necessary. To prove this we need the following two facts and one definition.
\begin{fact}\label{finer}
$W\geq W'\Rightarrow D(W)\subseteq D(W')$.
\end{fact}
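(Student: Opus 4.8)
The plan is to derive the set inclusion directly from the definition of the positive semi-definite order, without using block-positivity of $W$ or $W'$ at all. The key observation is that the hypothesis $W\geq W'$ is exactly the statement that $W-W'$ is positive semi-definite, and testing a positive semi-definite operator against an arbitrary density matrix produces a nonnegative number; so on the set of normalized states the functional $\rho\mapsto\trace(W'\rho)$ is pointwise dominated by $\rho\mapsto\trace(W\rho)$.

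Concretely, I would first record the elementary fact that $\trace(PQ)\geq0$ whenever $P,Q\geq0$: writing $\trace(PQ)=\trace\bigl(P^{1/2}QP^{1/2}\bigr)$ exhibits the argument as the trace of a positive semi-definite operator. Applying this with $P=W-W'$ and $Q=\rho$ gives $\trace(W\rho)-\trace(W'\rho)\geq0$, that is $\trace(W'\rho)\leq\trace(W\rho)$, for every $\rho\geq0$. Then I would take an arbitrary $\rho\in D(W)$; by definition $\rho\geq0$, $\trace(\rho)=1$ and $\trace(W\rho)<0$, so the inequality just proved yields $\trace(W'\rho)\leq\trace(W\rho)<0$, while $\rho$ remains a normalized state. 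Hence $\rho\in D(W')$, and since $\rho$ was arbitrary, $D(W)\subseteq D(W')$.

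I do not expect any genuine obstacle: the statement is essentially a reformulation of the definition of $W\geq W'$ restricted to density matrices. The only mildly delicate point is the auxiliary inequality $\trace(PQ)\geq0$ for $P,Q\geq0$, which is standard and handled by the square-root trick above. The role of this fact in what follows is to supply the monotonicity of the detection range $D(\cdot)$, which is what will let us compare the witnesses obtained after restricting $(H_1,H_2)$ to invariant subspaces.
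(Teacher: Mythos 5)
Your proof is correct: the paper states this as a Fact without proof, and your argument (positivity of $\trace\bigl((W-W')\rho\bigr)$ for $\rho\geq0$ via the square-root trick, then chaining $\trace(W'\rho)\leq\trace(W\rho)<0$) is exactly the standard monotonicity argument the paper implicitly relies on. Nothing is missing, and you are right that block-positivity of $W$ and $W'$ plays no role here.
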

\begin{definition}\label{projector}
Let $Q$ be an invariant subspace acted by $H$. Linear operator $H'$ such that $H'\phi=H\phi$ for $\phi\in Q$ and $H'\phi=0$ for $\phi\in Q^{\perp}$ is called the projector of $H$ onto $Q$, which is denoted by $H_Q$.
\end{definition}
\begin{fact}\label{fenkuai}
$Q_i=Q_i^{A}\otimes Q_i^{B}$ where $Q_i^{A}\subseteq\mathbb{C}^{d_A}$ and $Q_i^{B}\subseteq \mathbb{C}^{d_B}$ for $i=1,...,r$. If $Q_1,...,Q_r$ are orthogonal to each other and the sum of them equals to $\mathbb{C}^{d_A}\otimes \mathbb{C}^{d_B}$, then $W_{Q_1},...,W_{Q_r}$ are block-positive.
\end{fact}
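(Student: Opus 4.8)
The plan is to deduce the block‑positivity of each $W_{Q_i}$ straight from that of $W$, the only structural input being that projecting a product vector onto a \emph{product} subspace again produces a product vector. First I would record that, since $Q_i=Q_i^{A}\otimes Q_i^{B}$, the orthogonal projector $P_i$ onto $Q_i$ factorizes as $P_i=P_i^{A}\otimes P_i^{B}$, where $P_i^{A}$ and $P_i^{B}$ are the orthogonal projectors onto $Q_i^{A}$ and $Q_i^{B}$ (the range of a tensor product of orthogonal projectors is the tensor product of their ranges). I would also note here that each $Q_i$ is tacitly assumed invariant under $W$ — this is exactly what makes $W_{Q_i}$ well defined and Hermitian in the sense of \cref{projector} — so that $W_{Q_i}=P_iWP_i$.

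Next I would fix an arbitrary product vector $\ket{\alpha\beta}$ and put $\ket{\psi_i}=P_i\ket{\alpha\beta}=(P_i^{A}\ket{\alpha})\otimes(P_i^{B}\ket{\beta})$, which is again a product vector — possibly unnormalized, and possibly the zero vector if the $A$‑ or $B$‑projection annihilates it. Then
\begin{equation}
\bra{\alpha\beta}W_{Q_i}\ket{\alpha\beta}=\bra{\alpha\beta}P_iWP_i\ket{\alpha\beta}=\bra{\psi_i}W\ket{\psi_i}\geq 0,\nonumber
\end{equation}
where the inequality is the block‑positivity of $W$ applied to the product vector $\ket{\psi_i}$, together with the observation that the quantity $\bra{\gamma\delta}W\ket{\gamma\delta}$ scales by a nonnegative factor under rescaling of $\gamma$ and of $\delta$, hence stays $\geq 0$ for unnormalized product vectors (and is trivially $0$ for the zero vector). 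Since $\ket{\alpha}$ and $\ket{\beta}$ were arbitrary, $W_{Q_i}$ is block‑positive, and the argument runs uniformly over $i=1,\dots,r$.

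I do not expect a genuine obstacle: the single non‑mechanical point is the factorization $P_i=P_i^{A}\otimes P_i^{B}$ and its immediate corollary that $P_i$ maps product vectors to product vectors; the remainder (invariance of $Q_i$ giving $W_{Q_i}=P_iWP_i$, homogeneity of the block‑positivity condition, the degenerate zero case) is routine bookkeeping. I would also remark that the hypotheses that the $Q_i$ are mutually orthogonal and sum to $\mathbb{C}^{d_A}\otimes\mathbb{C}^{d_B}$ are not actually needed for any individual $W_{Q_i}$ to be block‑positive; they are stated because the fact is used afterwards in the situation where in addition $W=\sum_{i=1}^{r}W_{Q_i}$, i.e.\ where this decomposition exhibits $W$ as a sum of block‑positive pieces.
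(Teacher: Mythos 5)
Your proof is correct. The paper offers no proof of this Fact beyond the remark that one can view $W_{Q_1},\dots,W_{Q_r}$ as block diagonal matrices, and your argument --- $P_i=P_i^{A}\otimes P_i^{B}$ sends product vectors to (unnormalized) product vectors, so $\bra{\alpha\beta}W_{Q_i}\ket{\alpha\beta}=\bra{\psi_i}W\ket{\psi_i}\geq 0$ by block-positivity of $W$ --- is exactly the formalization of that intended block-diagonal argument, including the correct observations that invariance of $Q_i$ is what makes $W_{Q_i}=P_iWP_i$ well defined and that mutual orthogonality and completeness of the $Q_i$ are not needed for the conclusion itself.
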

\noindent\textbf{Remark. }To be concrete, one can consider $W_{Q_1},...,W_{Q_r}$ as block diagonal matrices.
\begin{theorem} \label{xinxue}
Let the common eigenvector $\ket{\alpha}$ of $A_1$ and $A_2$ correspond to eigenvalues $a_1$ and $a_2$ respectively. Let $A_i'=A_i-a_i\ket{\alpha}\bra{\alpha}$ for $i=1,2$. Then $\widetilde{D}(H_1,H_2)\subseteq\widetilde{D}(A_1'\otimes B_1,A_2'\otimes B_2)$.
\end{theorem}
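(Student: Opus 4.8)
The plan is to exploit the orthogonal product-type splitting $\mathbb{C}^{d_A}\otimes\mathbb{C}^{d_B}=Q\oplus Q^{\perp}$, with $Q=\{\ket{\alpha}\}^{\perp}\otimes\mathbb{C}^{d_B}$ (the first factor being the orthocomplement of $\mathrm{span}\{\ket{\alpha}\}$ in $\mathbb{C}^{d_A}$) and $Q^{\perp}=\mathrm{span}\{\ket{\alpha}\}\otimes\mathbb{C}^{d_B}$. Both summands have the product shape $Q_A\otimes Q_B$ demanded by \cref{fenkuai}; because $\ket{\alpha}$ is a common eigenvector of $A_1$ and $A_2$, both $\mathrm{span}\{\ket{\alpha}\}$ and $\{\ket{\alpha}\}^{\perp}$ are invariant under $A_1,A_2$, so $Q$ and $Q^{\perp}$ are invariant under $H_1,H_2$ and hence under every $k_1H_1+k_2H_2-c\,\mathbbm{1}$. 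Since $A_i'=A_i-a_i\ketbra{\alpha}{\alpha}$ agrees with $A_i$ on $\{\ket{\alpha}\}^{\perp}$ and kills $\ket{\alpha}$, in the sense of \cref{projector} one has $(H_i)_Q=A_i'\otimes B_i$ and $(H_i)_{Q^{\perp}}=a_i\ketbra{\alpha}{\alpha}\otimes B_i$.

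Next I would take any $\rho\in\widetilde{D}(H_1,H_2)$ and, by \eqref{union}, fix $k_1,k_2$ with $\rho\in D(W)$ for $W:=W_{min}(k_1H_1+k_2H_2)$; set $\lambda:=\lambda^{\otimes}_{min}(k_1H_1+k_2H_2)$. As $Q,Q^{\perp}$ are $W$-invariant we may split $W=W_Q+W_{Q^{\perp}}$, and \cref{fenkuai} shows $W_Q$ and $W_{Q^{\perp}}$ are block-positive. The crucial point is that $W_{Q^{\perp}}$ is actually positive semi-definite: it is supported on $Q^{\perp}$, whose first tensor factor is one-dimensional, so every vector of $Q^{\perp}$ is a product vector, and a block-positive operator supported there is positive semi-definite (indeed $W_{Q^{\perp}}=\ketbra{\alpha}{\alpha}\otimes(k_1a_1B_1+k_2a_2B_2-\lambda\,\mathbbm{1}_B)$ with the second factor $\geq 0$). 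Hence $\trace(W_{Q^{\perp}}\rho)\geq 0$, so $0>\trace(W\rho)=\trace(W_Q\rho)+\trace(W_{Q^{\perp}}\rho)$ gives $\trace(W_Q\rho)<0$, i.e.\ $\rho\in D(W_Q)$.

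It then remains to prove $D(W_Q)\subseteq\widetilde{D}(A_1'\otimes B_1,A_2'\otimes B_2)$. Here $W_Q=k_1A_1'\otimes B_1+k_2A_2'\otimes B_2-\lambda\,\mathbbm{1}_Q$, where $\mathbbm{1}_Q$ is the orthogonal projection onto $Q$. The goal is to compare $W_Q$, via \cref{finer}, with a genuine witness $W_{min}(k_1A_1'\otimes B_1+k_2A_2'\otimes B_2)$ (or its companion $W_{max}$, using $W_{min}(H)=W_{max}(-H)$, equivalently flipping $(k_1,k_2)$): since these differ only by a multiple of a projector, and since $\lambda^{\otimes}_{min}(k_1A_1'\otimes B_1+k_2A_2'\otimes B_2)$ is exactly the minimum of the same product-vector expectations of $k_1H_1+k_2H_2$ taken over product vectors inside $Q$, capped at $0$ by the $\ket{\alpha}$-direction, one should be able to arrange $W_Q\geq W_{min}(k_1A_1'\otimes B_1+k_2A_2'\otimes B_2)$; \cref{finer} then yields $D(W_Q)\subseteq D(W_{min}(k_1A_1'\otimes B_1+k_2A_2'\otimes B_2))\subseteq\widetilde{D}(A_1'\otimes B_1,A_2'\otimes B_2)$.

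I expect this last step to be the main obstacle, since replacing $\lambda\,\mathbbm{1}$ by $\lambda\,\mathbbm{1}_Q$ makes the operator ordering sign-sensitive. A route that bypasses witness comparison is to argue directly in $\mathbb{R}^2$: using the product-type identification of $(H_i)_Q$ one computes $\Lambda^{sep}(A_1'\otimes B_1,A_2'\otimes B_2)=\mathrm{conv}\big(\{(\trace(H_1\sigma),\trace(H_2\sigma)):\sigma\text{ separable and supported on }Q\}\cup\{0\}\big)$, while $\rho\in D(W_Q)$ is equivalent to $\trace(\mathbbm{1}_Q\rho)>0$ together with the $Q$-block image $(\trace((A_1'\otimes B_1)\rho),\trace((A_2'\otimes B_2)\rho))$ being separated from that separable set by the line $k_1x+k_2y=\lambda^{\otimes}_{min}((k_1H_1+k_2H_2)|_Q)$; one then has to check, by a planar convexity argument, that this separation persists after the origin is adjoined to the separable set.
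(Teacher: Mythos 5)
Your first two paragraphs reproduce the paper's proof almost verbatim in structure: the paper takes the same splitting ($Q_1=\mathrm{span}\{\ket{\alpha}\}\otimes\mathbb{C}^{d_B}$, $Q_2=\mathrm{span}\{\ket{\alpha}\}^{\perp}\otimes\mathbb{C}^{d_B}$), writes $W_{min}(k_1H_1+k_2H_2)=W_1+W_2$ with $W_1,W_2$ the projectors in the sense of Definition \ref{projector}, invokes Fact \ref{fenkuai} for their block-positivity, and concludes $W_1\geq0$ exactly as you do, from the fact that every vector of the $\ket{\alpha}$-block is a product vector. Up to the conclusion $\rho\in D(W_Q)$ (equivalently, $W_{min}(k_1H_1+k_2H_2)\geq W_2$ plus Fact \ref{finer}) your proposal is correct and identical in substance to the paper's.

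The genuine gap is the step you yourself flag and never close: $D(W_Q)\subseteq\widetilde{D}(A_1'\otimes B_1,A_2'\otimes B_2)$. "One should be able to arrange $W_Q\geq W_{min}(k_1A_1'\otimes B_1+k_2A_2'\otimes B_2)$" is a hope, not an argument, and the planar-convexity fallback is likewise only a sketch. This is precisely the point the paper disposes of in one sentence ("Since $W_2$ is block-positive then $W_2\geq W_{min}(\sum_i k_iA_i'\otimes B_i)$"), and your suspicion that the ordering is sign-sensitive is substantive: writing $G=\sum_i k_iA_i'\otimes B_i$ and $\mu=\lambda^{\otimes}_{min}(G)$, one has $W_Q-W_{min}(G)=\mu\mathbbm{1}-\lambda^{\otimes}_{min}\mathbbm{1}_{Q}$, and since $\bra{\alpha\beta}G\ket{\alpha\beta}=0$ forces $\mu\leq0$, positive semi-definiteness of this difference requires both $\mu=0$ and $\lambda^{\otimes}_{min}\leq0$. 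Neither condition follows from block-positivity of $W_Q$ alone (e.g.\ $A_i=B_i=\mathbbm{1}$, $k_i>0$ gives $\lambda^{\otimes}_{min}>0$ and the operator inequality fails, even though the detection sets are then all empty and the inclusion holds vacuously). So to complete your proof you must either verify these sign conditions whenever $D(W_{min}(k_1H_1+k_2H_2))\neq\emptyset$, or bypass the operator comparison via the trace identity $\trace(W_{min}(G)\rho)=\trace(W_Q\rho)+\lambda^{\otimes}_{min}\trace(\mathbbm{1}_{Q}\rho)-\mu$ and show the correction term is nonpositive for the relevant $\rho$ (possibly after re-choosing $(k_1,k_2)$, which the union in \eqref{union} permits). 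As submitted, the proposal stops exactly where the real work begins.
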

\begin{proof}
Let $\lambda^{\otimes}_{min}$ correspond to $k_1H_1+k_2H_2$. Let $Q_1=$span$\{\ket{\alpha}\}\otimes \mathbb{C}^{d_B}$ and $Q_2=$span$\{\ket{\alpha}\}^{\perp}\otimes \mathbb{C}^{d_B}$. Let
\begin{align}
&W_1=\sum\limits_{i=1}^{2}(k_ia_i\ket{\alpha}\bra{\alpha}\otimes B_i)-\lambda_{min}^{\otimes}\mathbbm{1}_{Q_1}, \nonumber\\
&W_2=\sum\limits_{i=1}^{2}(k_iA_i'\otimes B_i)-\lambda_{min}^{\otimes}\mathbbm{1}_{Q_2}.\nonumber
\end{align}
Then $W_{min}(k_1H_1+k_2H_2)=W_1+W_2$. Clearly $Q_1\perp Q_2$, meanwhile $W_1$ and $W_2$ are projectors of $W_{min}(k_1H_1+k_2H_2)$ onto $Q_1$ and $Q_2$ respectively. From \reffact{fenkuai}, $W_1$ and $W_2$ are block-positive. For any $\phi\in\mathbb{C}^{d_A}\otimes \mathbb{C}^{d_B}$ let $\phi=\phi_1+\phi_2$ where $\phi_1\in Q_1$ and $\phi_2\in Q_2$. Since span$\{\ket{\alpha}\}$ is 1-dimensional then $\phi_1$ must be a product vector. Then $\phi^{*}W_1\phi=\phi_1^{*}W_1\phi_1\geq0$, which means $W_1\geq0$.

Thus $W_{min}(k_1H_1+k_2H_2)\geq W_2$. Since $W_2$ is block-positive then $W_2\geq W_{min}(\sum\limits_{i=1}^{2}(k_iA_i'\otimes B_i))$, which means
\begin{equation}
W_{min}(k_1H_1+k_2H_2)\geq W_{min}(\sum\limits_{i=1}^{2}(k_iA_i'\otimes B_i)). \nonumber
\end{equation}
Finally from \reffact{finer} and \eqref{union} the conclusion is clear.
\end{proof}
From \cref{xinxue} if we subtract projectors of $A_i$ or $B_i$ onto common eigenspaces, the new pair will be finer. Moreover, \cref{only0common} guarantees the new pair to be effective. Since there is no sign that the existence of common eigenvectors not only corresponding to eigenvalue 0 can simplify computation(we will explain this opinion in Appendix A), then we say the sufficient condition in \cref{only0common} is almost necessary for practice.
%%%%%%%%%%
\section{Conclusion and prospect}
%%%%%%%%%%

So far we have very clearly described the basic problem in Section 1 and improved the solution to a large extent. During the past time we have derived other sufficient conditions for an effective pair, which are milder in a sense. However, from \cref{xinxue} and Appendix A, we now think the milder parts of them are not valuable. One of these sufficient conditions  will be stated in Appendix B for interested readers.

For further research, we shall design $(H_1,H_2)$ for $\mathbb{C}^{2}\otimes \mathbb{C}^{4}$ system. One principle is that the computation of $\Lambda^{sep}(H_1,H_2)$ should be practical. Another principle is the supplement to PPT criteria. Any block-positive operator acting on $\mathbb{C}^{2}\otimes \mathbb{C}^{2}$ system is decomposable(see Section 3.C of\cite{2000}), which means it can not detect any PPT entanglement. Hence similarly by the proof of \cref{xinxue}, we assert there should not exist an orthogonal decomposition of $\mathbb{C}^{4}$ such that the decomposed two subspaces are invariant acted by $B_1$ and $B_2$ respectively, which means $B_1$ and $B_2$ should not be ``reducible"\cite{3X3}.

From Section 2 we know $\Lambda^{\otimes}(H_1,H_2)$ is the set $\{(x_1x_2,y_1y_2)\}$ where $(x_1,y_1)\in\Lambda(A_1,A_2)$ and $(x_2,y_2)\in\Lambda(B_1,B_2)$. That corresponds to ``a kind of product" in the 4th paragraph of Section 1. Similarly when $M_A$ and $M_B$ are complex matrices, $\Lambda^{\otimes}(M_A\otimes M_B)$ is the Minkowski product of $\Lambda(M_A)$ and $\Lambda(M_B)$\cite{withtensor}. However, we have not found a profound relationship between the two kinds of product. Otherwise we can take advantage of the results related to Minkowski product for computation.

For the computation in $\mathbb{C}^{2}\otimes \mathbb{C}^{4}$ and $\mathbb{C}^{3}\otimes \mathbb{C}^{3}$ systems, the papers\cite{3X3} introduced are useful. For the computation of the joint numerical range for the extended form $(H_1,H_2,H_3)$, \cite{3X3} itself may be crucial.
%%%%%%%%%
\ack
This work is supported by the National Natural Science Foundation of China under Grants No. 61672007. We thank the volunteer instructor BangHai Wang from Guangdong University of Technology. We are especially grateful to him for his introduction of\cite{German} and an important tip for \cref{kernel}. %We also thank the authors of\cite{German} for much inspiration.
\appendix
\section{}
%\appendix{hard computation from "actual" common eigenspace}
In this section the definition of symbols follows \cref{xinxue}. We will explain why we approximately think $\Lambda^{sep}(H_1,H_2)$ is generally more difficult to compute than $\Lambda^{sep}(A_1'\otimes B_1,A_2'\otimes B_2)$, where $A_1'$ and $A_2'$ are defined in \cref{xinxue}. Let us focus on $\Lambda(A_1,A_2)$.

Let the point $P_0$ be $(a_1,a_2)$. From \fref{fig1}, we reckon a $\Lambda(A_1,A_2)$ centring around $(0,0)$ or at least including $(0,0)$ can simplify computation. Suppose $\Lambda(A_1,A_2)$ includes $(0,0)$. Then we assert
\begin{equation}\label{conv}
\Lambda(A_1,A_2)=\rm{conv}(\Lambda(A_1',A_2')\cup P_0).
\end{equation}
\setcounter{section}{1}
where conv$(S)$ means the convex hull of $S$. From the premise that $\Lambda(A_1,A_2)$ includes $(0,0)$, the fact that $\Lambda(A_1,A_2)$ is convex and the method that for $\psi\in\mathbb{C}^{d_A}$ we divide it as $\psi=\alpha+\psi'$ where $\alpha\perp\psi'$, we can prove \eqref{conv}.

From \eqref{conv} if we postulate $\Lambda(A_1,A_2)$ includes $(0,0)$, then $\Lambda(A_1,A_2)$ includes $\Lambda(A_1',A_2')$. Hence $\Lambda^{sep}(H_1,H_2)$ includes $\Lambda^{sep}(A_1'\otimes B_1,A_2'\otimes B_2)$. Generally the included set is easier to compute. Moreover, if we consider more common eigenvectors which correspond to more points like $P_0$, the $P_0$ in \eqref{conv} will be substituted by the polygon spanned by those points. This polygon probably makes $\Lambda(A_1,A_2)$ more complex but not more symmetric.
\section{}
\begin{proposition}
Let the ground states of $H_1$ be non-degenerate. Statements {\rm(a)} and {\rm(b)} are as follows: {\rm(a)} For the ground state $\ket{\alpha \beta}$ of $H_1$, $\ket{\alpha}$ is an eigenvector of $A_2$ or $\ket{\beta}$ is an eigenvector of $B_2$. {\rm(b)} There exists a $\delta>0$ such that if $|x|<\delta$ then at least one ground state of $H_1+xH_2$ is separable. Then {\rm(a)}$\Leftrightarrow${\rm(b)}.
\end{proposition}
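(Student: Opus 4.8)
The plan is to reduce everything to the structure of the unique ground state of $H_1=A_1\otimes B_1$. First I would record what non-degeneracy of that ground state forces: the minimal eigenvalue $\mu_0$ of $H_1$ is a product $a_1 b_1$ of a \emph{unique} pair $a_1\in\mathrm{spec}(A_1)$, $b_1\in\mathrm{spec}(B_1)$, each a simple eigenvalue (and $a_1,b_1\neq0$, else the $0$-eigenspace of $H_1$ would already be degenerate); hence the ground state is the product vector $\ket{\alpha\beta}$ with $A_1\ket{\alpha}=a_1\ket{\alpha}$ and $B_1\ket{\beta}=b_1\ket{\beta}$. Fixing orthonormal eigenbases $\{\ket{\alpha_j}\}$ of $A_1$ and $\{\ket{\beta_k}\}$ of $B_1$ with $\ket{\alpha_1}=\ket{\alpha}$, $\ket{\beta_1}=\ket{\beta}$ and eigenvalues $a^{(j)},b^{(k)}$, the products $\ket{\alpha_j\beta_k}$ diagonalise $H_1$ with eigenvalues $a^{(j)}b^{(k)}$, and non-degeneracy says precisely that $a^{(j)}b^{(k)}\neq\mu_0$ for every $(j,k)\neq(1,1)$; this is the fact that will keep the perturbative denominators below from vanishing.

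For (a)$\Rightarrow$(b) I would argue directly. Assume, say, that $\ket{\alpha}$ is an eigenvector of $A_2$, $A_2\ket{\alpha}=a_2\ket{\alpha}$ (the other alternative in (a) is symmetric, via $\mathbb{C}^{d_A}\otimes\mathrm{span}\{\ket{\beta}\}$). Then $Q:=\mathrm{span}\{\ket{\alpha}\}\otimes\mathbb{C}^{d_B}$ is invariant under both $H_1$ and $H_2$, hence under $H_1+xH_2$, which on $Q$ equals $\ketbra{\alpha}{\alpha}\otimes(a_1B_1+xa_2B_2)$, all of whose eigenvectors inside $Q$ are products $\ket{\alpha}\otimes\ket{\cdot}$. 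At $x=0$ the least eigenvalue inside $Q$ is $\mu_0$, whereas every eigenvalue of $H_1$ inside $Q^{\perp}$ is strictly larger than $\mu_0$ by non-degeneracy; continuity of the spectra of $(H_1+xH_2)|_Q$ and $(H_1+xH_2)|_{Q^{\perp}}$ then produces a $\delta>0$ such that for $|x|<\delta$ the ground energy of $H_1+xH_2$ is attained strictly inside $Q$. Consequently for $|x|<\delta$ every ground state lies in $Q$ and is a product vector, which is (b).

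For (b)$\Rightarrow$(a) I would use first-order perturbation theory and argue by contraposition. Suppose neither $\ket{\alpha}$ is an eigenvector of $A_2$ nor $\ket{\beta}$ of $B_2$; writing $c_j=\bra{\alpha_j}A_2\ket{\alpha}$ and $d_k=\bra{\beta_k}B_2\ket{\beta}$ this means there are $j_0,k_0\neq1$ with $c_{j_0}d_{k_0}\neq0$. Since the $H_1$-ground state is simple, Rellich--Kato perturbation theory gives, for small $|x|$, a unique ground state $\ket{\psi(x)}$ of $H_1+xH_2$, real-analytic in $x$, with $\ket{\psi(0)}=\ket{\alpha\beta}$ and $\ket{\psi'(0)}=\sum_{(j,k)\neq(1,1)}\frac{c_jd_k}{\mu_0-a^{(j)}b^{(k)}}\ket{\alpha_j\beta_k}$, where every denominator is nonzero by the first paragraph; in particular the $\ket{\alpha_{j_0}\beta_{k_0}}$-component of $\ket{\psi'(0)}$ is nonzero. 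Writing the coefficient matrix of $\ket{\psi(x)}$ in the basis $\{\ket{\alpha_j\beta_k}\}$ as $C(x)$, one has $C_{11}(x)\to1$, $C_{1k_0}(x),C_{j_01}(x)=O(x)$ and $C_{j_0k_0}(x)=x\cdot(\text{nonzero})+O(x^2)$, so the $2\times2$ minor on rows $\{1,j_0\}$ and columns $\{1,k_0\}$ equals $x\cdot(\text{nonzero})+O(x^2)$, hence is $\neq0$ for all sufficiently small $x\neq0$; thus $C(x)$ has rank $\geq2$ and $\ket{\psi(x)}$ is entangled, contradicting (b). An equivalent and slightly cleaner phrasing avoids contraposition: (b) makes $\ket{\psi(x)}$ a product vector for all small $x$, so $\ket{\psi'(0)}$ must lie in the tangent space $\{\ket{u}\otimes\ket{\beta}+\ket{\alpha}\otimes\ket{v}\}$ of the product-vector cone at $\ket{\alpha\beta}$, whence its $(j,k)$-component vanishes for all $j,k\neq1$, i.e.\ $c_jd_k=0$ whenever $j,k\neq1$, which is exactly the disjunction in (a).

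I expect the routine half to be (a)$\Rightarrow$(b) --- a block decomposition together with continuity of eigenvalues --- and the main obstacle to be (b)$\Rightarrow$(a), where non-degeneracy must be invoked twice (to keep the perturbed ground state simple and analytic, and to keep the denominators $\mu_0-a^{(j)}b^{(k)}$ from vanishing) and then one has to read off that the first-order correction raises the Schmidt rank. The tangent-space reformulation is, I think, the cleanest route, since it makes transparent why the obstruction factorises into the two alternatives of (a).
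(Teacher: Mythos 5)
Your proof is correct in both directions. The direction (a)$\Rightarrow$(b) --- the invariant block $Q=\mathrm{span}\{\ket{\alpha}\}\otimes\mathbb{C}^{d_B}$ versus $Q^{\perp}$ plus continuity of the two restricted spectra --- is the natural argument and agrees with what the paper intends. For (b)$\Rightarrow$(a), however, you take a genuinely different route. The paper only sketches its proof, saying it rests on its earlier proposition (any product eigenvector of $k_1H_1+k_2H_2$ with $k_1k_2\neq0$ and \emph{nonzero} eigenvalue forces $\ket{\alpha}$ to be a common eigenvector of $A_1,A_2$ or $\ket{\beta}$ of $B_1,B_2$) together with $\lim_{x\to0}E_1(H_1+xH_2)=E_1(H_1)$: since non-degeneracy gives $E_1(H_1)=a_1b_1\neq0$, the separable ground states $\ket{\alpha_x\beta_x}$ supplied by (b) fall under that proposition for all small $x\neq0$, and a compactness/continuity argument (ground projections converge to $\ketbra{\alpha\beta}{\alpha\beta}$; the set of normalized common eigenvectors is closed) transports the common-eigenvector property to the limit. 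You instead invoke Rellich--Kato analyticity and the first-order eigenvector correction, then detect entanglement via a $2\times2$ minor of the coefficient matrix. Both are valid; the paper's route is more elementary (only continuity of eigenvalues, no differentiability of the eigenvector) and reuses the algebraic lemma it already proved, which fits its stated aim of avoiding perturbation-theoretic machinery. Your route buys quantitative content --- the Schmidt rank jumps to $\geq2$ at first order in $x$ with an explicit coefficient --- and the tangent-space-to-the-product-cone reformulation transparently explains why the obstruction factorizes as $c_jd_k=0$, i.e.\ as the disjunction in (a). Your preliminary observation that $a_1,b_1$ are simple and nonzero, though not strictly needed in your own argument, is exactly the point that legitimizes the paper's appeal to its $\lambda\neq0$ hypothesis, so it is worth keeping.
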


The proof is mainly based on \cref{initial} and the following fact:
\begin{equation}
\lim\limits_{x\to 0}E_1(H_1+xH_2)=E_1(H_1)  \nonumber
\end{equation}
\setcounter{section}{1}
where $x\in\mathbb{R}$ and $E_1(H)$ denotes the minimum eigenvalue of $H$. From the direction (b)$\Rightarrow$(a) we assert in non-degenerate scenario, $\neg$(a) guarantees an effective pair and is a milder sufficient condition than the condition in \cref{kernel}. From the other direction (a)$\Rightarrow$(b) we assert in non-degenerate scenario, $\neg$(a) is milder than any other sufficient condition derived from the Perturbation Theory that \cite{German} used because this theory is based on the idea of limit\cite{perturbation}.
\section*{References}

\end{document}